\newtheorem{definition}{Definition}
\newtheorem{proposition}{Proposition}
\newtheorem{theorem}{Theorem}
\newtheorem{example}{Example}
\begin{document}

\def\N{\mathcal{N}}
\def\As{A_{\mathbf{s}}}
\def\Bs{B_{\mathbf{s}}}
\def\Cs{C_{\mathbf{s}}}
\def\SN{S_{\mathcal{N}}}
\def\w{\mathbf{w}}
\def\z{\mathbf{z}}
\def\c{\mathbf{c}}
\def\u{\mathbf{u}}
\def\h{\mathbf{h}}
\def\g{\mathbf{g}}
\def\f{\mathbf{f}}
\def\0{\mathbf{0}}
\def\s{\mathbf{s}}
\def\Aw{A_{\mathbf{w}}}
\def\Bw{B_{\mathbf{w}}}
\def\Cw{C_{\mathbf{w}}}
\def\M{\mathcal{M}}
\def\C{\mathcal{C}}
\def\I{\mathcal{I}}
\def\Cd{\mathcal{C}^\perp}
\def\FF{\mathbb{F}}
\def\ZZ{\mathbb{Z}}
\def\EE{\mathbb{E}}
\def\FFq{\mathbb{F}_q}
\def\FFqn{\mathbb{F}_{q}^{N}}
\def\FFqk{\mathbb{F}_{q}^{k}}
\def\lambdaC{\Lambda_{\mathcal{C}}}
\def\lambdaCd{\Lambda_{\mathcal{C}^{\perp}}}

\def\support{{\Lambda}}
\def\br{{\bf r}}
\def\bc{{\bf c}}
\def\bz{{\bf z}}
\def\bs{{\bf s}}
\def\bu{{\bf u}}
\def\bv{{\bf v}}
\def\bg{{\bf g}}
\def\bff{{\bf f}}
\def\ground{{\Omega}}
\def\bc{{\bf c}}
\def\solA{{\bf A}}
\def\solB{{\bf B}}
\def\solC{{\bf C}}
\def\field{{\mathbb F}_{q}}
\def\Kp{{\mathcal K}}

\def\N{{\cal N}}
\def\bz{{z_{\scriptscriptstyle \N}}}
\def\by{{y_{\scriptscriptstyle \N}}}
\def\l{\left}
\def\r{\right}
\def\gf{ {\mathbb F}}
\def\rankfn{\rho}
\def\A{\mathcal A}
\def\M{\mathcal M}
\def\X{\mathcal X}
\def\B{\mathcal B}
\def\reals{\mathbb R}
\def\Z{\mathcal Z}
\def\K{\mathcal K}
\def\C{\mathcal C}
\def\p{\prime}
\def\real{{\mathbb R}}
\def\dist{{W}}
\def\T{\mathcal T}

\def\x{{\bf x}}
\def\y{{\bf y}}

\def\pattern{{\gamma}}
\def\erasurematrix{{\bf I}}

\def\X{{\bf X}}
\def\H{{\bf H}}
\def\solG{{\bf G}}
\def\zero{{\bf 0}}
\def\r{{\bf r}}

\def\R{\mathcal R}
\def\P{\mathcal P}
\def\cbP{\mathbf P}
\def\FN{\mathbf {FailedNodes}}
\def\FP{\mathbf {FailurePatterns}}
\def\A{\mathcal{A}}
\def\F{\mathcal{F}}
\def\G{\mathcal{G}}
\def\S{\mathcal{S}}
\def\J{\mathcal{J}}
\def\E{\mathcal{E}}
\def\HH{\mathcal{H}}
\def\j{\mathbf{J}}
\def\L{\mathcal{L}}
\def\U{\mathcal{U}}
\def\q{\mathbf{q}}
\def\v{\mathbf{v}}
\def\D{\mathcal{D}}
\def\Q{\mathcal{Q}}
\def\d{\mathbf{d}}
\def\k{\mathbf{k}}
\def\b{\mathbf{b}}
\def\solS{\mathbf{S}}
\def\solF{\mathbf{F}}

\def\vecd{\underline{d}}

\def\lmd{{{\textsf {lmd}}}}
\def\gmd{{{\textsf {gmd}}}}

\def\sK{{{\textsf {K}}}}
\def\sf{{{\textsf {f}}}}
\def\sA{{{\textsf {A}}}}
\def\sU{{{\textsf {U}}}}
\def\sr{{{\textsf {r}}}}
\def\su{{{\textsf {u}}}}
\def\sT{{{\textsf {T}}}}
\def\st{{{\textsf {t}}}}
\def\sR{{{\textsf {R}}}}

\def\G{{G}}
\def\bx{{\bf x}}
\def\by{{\bf y}}
\def\bg{{\bf g}}
\def\zeroz{\textbf{0}}
\def\X{{X}}

\long\def\/*#1*/{}

\title{Coded Caching in Partially Cooperative D2D Communication Networks
\thanks{This work was partially supported by a grant from the University Grants Committee of the Hong Kong Special Administrative Region,
China (Project No. AoE/E-02/08).}}
\author{\IEEEauthorblockN{Ali Tebbi, Chi Wan Sung}
\IEEEauthorblockA{\textit{Department of Electronic Engineering}\\ 
\textit{City University of Hong Kong}\\
Email: ali.tebbi, albert.sung@cityu.edu.hk}}

\maketitle

\begin{abstract}
The backhaul traffic is becoming a major concern in wireless and cellular networks (e.g., 4G-LTE and 5G) with the increasing demands for online video streaming. Caching the popular content in the cache memory of the network users (e.g., mobile devices) is an effective technique to reduce the traffic during the networks' peak time. However, due to the dynamic nature of these networks, users privacy settings, or energy limitations, some users may not be available or intend to participate during the caching procedures. In this paper, we propose caching schemes for device-to-device communication networks where a group of users show selfish characteristics. The selfish users along with the non-selfish users will cache the popular content, but will not share their useful cache content with the other users to satisfy a user request. We show that our proposed schemes are able to satisfy any arbitrary user requests under partial cooperation of the network users.
\end{abstract}

\begin{IEEEkeywords}
Cache networks, D2D networks, Selfish user 
\end{IEEEkeywords}

\IEEEpeerreviewmaketitle

\section{Introduction}
The data traffic in wireless and mobile networks is exponentially growing during the recent years specially due to the expanding demands of multimedia contents such as YouTube videos \cite{CISCO2017}. The massive resultant backhaul traffic in wireless networks can be reduced by bringing the popular content forward closer to the edge of the network. This technique is called ``caching`` which basically is storing the popular content in resources closer to the end users such as base stations, routers, and mobile devices in order to serve the users' requests. A caching scheme generally consists of two phases: caching (or placement) phase and delivery phase. The caching phase carries out during the off-peak time of the network when the network is not congested to duplicate and distribute the popular content across the network. The delivery phase is carried out during the peak time of the network when the network is congested in order to serve the users' requests by the contents cached in the network.

A novel approach of caching is recently introduced by Maddah-Ali $\textit{et al.}$ in \cite{CodedCaching} known as ``coded caching`` where it aims to jointly optimize the caching and delivery phase in order to achieve a global caching gain (i.e., caching gain which is attained from aggregate global cache size). 
The cache placement in this approach is based on constructing a multicast opportunity for all users at the same time. In other words, the server can deliver any arbitrary users demands by a number of coded multicast transmissions. 

Since the introduction of the coded caching, this approach received a considerable attention due to its noticeable caching gain (i.e., achieving local and global gain at the same time) over the conventional schemes. The centralized nature of the caching scheme in \cite{CodedCaching} is quite far from the real wireless communication networks. For example, due to the mobility in cellular networks, the number of active users during the delivery phase might not be the same as in the caching phase. 
A ``decentralized`` approach of coded caching is proposed in \cite{DecentCodedCaching} where during the placement phase each user randomly and independently caches some parts of each file. Then, during the delivery phase, the server (or base station) takes the advantage of the multicast opportunities which has been created during the caching phase to satisfy the user requests.

Several works can be found in literature studying caching schemes for Device-to-Device (D2D) communication networks \cite{D2DVideo,OptThrouOneHop,FemtoCacheD2D,D2DCodedCaching,D2DCaching}. A one-hop D2D network is considered in \cite{OptThrouOneHop} where each user caches a subset of files randomly. It has been shown that the throughput per user varies as a function of the user cache size over the file library size under this scheme for sufficiently large number of users and files. 
The coded caching for D2D communication networks is studied in \cite{D2DCodedCaching}. In the proposed model, there exists no central server (or base station) in the network to control the caching and delivery phases while all users contribute by multicasting their useful cache content to other users during the delivery phase in order to satisfy any user request.  

In cellular networks it is likely that some of the users have left the network during the delivery phase which makes their cache content not available for the other users during the delivery phase. Moreover, it is likely that some of the users will not participate during the delivery phase due to their privacy settings or their energy saving concerns.

In this paper, we consider a partially cooperative D2D communication network without a central server (or base station) where some of the users are selfish. More precisely,  although all of the users will cache the file library in their cache memory, only a group of non-selfish users participate in delivery phase by transmitting their useful cache content. Obviously, all of the users including selfish users can receive these transmissions to recover their requested file. We propose a ``deterministic'' and ``random'' caching scheme where we show that these schemes can satisfy any arbitrary user requests at the presence of selfish users in the network (i.e., partial cooperation of the users). We also establish the necessary conditions and achievable rates of these schemes.

The remainder of this paper is organised as follows. In Section \ref{sec:ProbState}, we introduce the network model and the mathematical tools and definitions which we will be using in our proposed caching schemes. In Sections \ref{sec:DetCache} and \ref{sec:RandCache} we present our deterministic and random caching schemes for D2D networks with selfish users, respectively. We discuss the caching and delivery procedures and provide examples, the corresponding necessary conditions and achievable rates. We then conclude the paper in Section \ref{sec:Concl}.

\section{Problem Statement} \label{sec:ProbState}
Our proposed network consists of $K$ users $\U=\{u_1,\ldots,u_K\}$ each with a cache memory of size $MB$ bits. The users of the network are divided into two groups of $\emph{selfish}$ and $\emph{non-selfish}$ users specified by the subsets $\S$ and $\T$, respectively, such that $\U=\S \cup \T$. All users are in the coverage range of each other such that they all can communicate with each other.

\begin{definition}[File Library]
The file library is a set of $N$ popular demanding files (e.g. videos) $\Omega=\{\omega_1,\ldots,\omega_N\}$ each of size $B$ bits.
\end{definition}


We assume that each user $u_i$ makes an arbitrary request $\sr_k$ from the file library $\Omega$ independent from the other users. Denote the request vector of all users by $\R=(\sr_1,\sr_2,\ldots,\sr_K)$, where $\sr_k \in \Omega$ for all $k \in \{1,\ldots,K\}$.

\begin{definition}[Caching Function]
The caching function consists of a function $\phi_u$ to map the file library into the cache memory of user $u$ such that $\phi_u: \FF_2^{NB} \rightarrow \FF_2^{MB}$.
\end{definition}

\begin{definition}[Delivery Functions]
The delivery functions consist of two functions $\psi_u: \FF_2^{MB} \times \Omega^K \rightarrow \FF_2^{R_uB}$ and $\Psi_u: \FF_2^{B\sum_{v \in \sT_u}R_v} \times \FF_2^{MB} \times \Omega^K \rightarrow \FF_2^B$, respectively, for generating the transmitted message of user $u$ and for decoding the received messages at user $u$, where $\sT_u$ is the set of users whose messages are received at user $u$ and $R_u$ is the transmission rate of user $u$.
\end{definition}

The caching and delivery functions are the essential tools to characterize the caching and delivery procedures as follows.

\noindent
$\textbf{Caching Phase:}$
During the caching phase, the cache content $\Gamma_u$ of each user $u$ is generated and stored in its cache memory by employing the caching function such that $\Gamma_u=\phi_u(\omega_n: \forall \omega_n \in \Omega)$.

\noindent
$\textbf{Delivery Phase:}$
During the delivery phase, each non-selfish user $u$ generates its transmitted message $X_{u,\R}=\psi_u(\Gamma_u,\R)$ as a function of its cache content $\Gamma_u$ and the user request vector $\R$. Furthermore, each user $u$ decodes its requested file $\hat{\omega}_{u,\R}=\Psi_u(\{X_{v,\R}: \forall v \in \sT_u\}, \Gamma_u, \R)$ as a function of the received messages $X_{v,\R}$ from all the transmitter nodes $v \in \sT_u$, its cache content $\Gamma_u$, and the request vector $\R$.

During the caching phase all of the users in $\U$ will generate their cache content as a function of the file library $\Omega$. However, only non-selfish users in $\T$ will participate during the delivery phase. Then, the worst-case error probability is given by
\[
\text{Pr}_\text{e}= \max_{\R} \max_{u \in \U} \text{P} \left(\hat{\omega}_{u,\R} \neq \omega_{u}\right).
\]

\begin{definition}[Achievable rate]
Let $R=\sum_{u \in \U}R_u$ where $R_u$ is the transmission rate of non-selfish user $u$ to satisfy a request vector $\R$. Then, the Memory-Rate pair $(M,R)$ (i.e., R(M)) is achievable if $R(M)$ is achievable for any arbitrary request vector $\R$. In other words, the rate $R(M)$ is said to be achievable if for any $\epsilon > 0$ and sufficiently large file size $B$ there exists a caching-delivery scheme such that $\text{\emph{Pr}}_\text{\emph{e}} \leq \epsilon$.
\end{definition}

\section{Deterministic Caching} \label{sec:DetCache}
In this section we propose a deterministic caching scheme where each user will cache a predetermined parts of each file. As it is proposed in \cite{CodedCaching} and \cite{D2DCodedCaching}, deterministic caching procedure will create a multicasting opportunity among all users during the delivery phase in order to retrieve their requested file. As mentioned earlier, only non-selfish users will transmit their cache content during the delivery phase. Let $t=\frac{MK}{N}$ be an integer between 1 and $K-1$. Note that when $t \geq K$, each user can cache all the $N$ files.

\noindent $\bullet$ {\textbf{Caching phase:}} The caching procedure consists of the following steps:
\begin{enumerate}
\item
Each file $\omega_n \in \Omega$ is divided into $t{K \choose t}$ subfiles.
\item
The subfiles of each file are partitioned into ${K \choose t}$ groups, each of which has $t$ subfiles. Each group is indexed by a specific subset of $t$ users.
\item
For every file in the library, its subfiles in each group will be cached in the corresponding subset of users. For example, the subfiles in the group with index $\{u_{i_1}, u_{i_2}, \ldots u_{i_t}\}$ will be cached in users $u_{i_1}, u_{i_2}, \ldots, u_{i_t}$.
\end{enumerate}

This caching procedure guarantees that any subset of $t$ users share the same $t$ subfiles of each file in the library. Furthermore, each user caches a total of ${K-1 \choose t-1} tN$ subfiles. Since each subfile is of size $B / [t {K \choose t}]$ bits, the total number of bits stored in the cache of each user is $tNB/K = MB$, which satisfies the cache memory constraint.

\begin{example} \label{exmp:DetCachePlace}
Consider a network of $K=4$ users each of which has cache size $M=3$ and a file library $\Omega=\{\omega_1,\omega_2,\omega_3,\omega_4\}$ of size $N=4$ files. Since $t=3$, then each file are divided into $12$ subfiles and any subset of $t=3$ users will cache the same $3$ subfiles from each file. The cache content of each user is depicted in Figure \ref{fig:DetCache} which follows from the aforementioned caching phase.

\begin{figure}[htb]
\centering
\includegraphics[width=.48\textwidth]{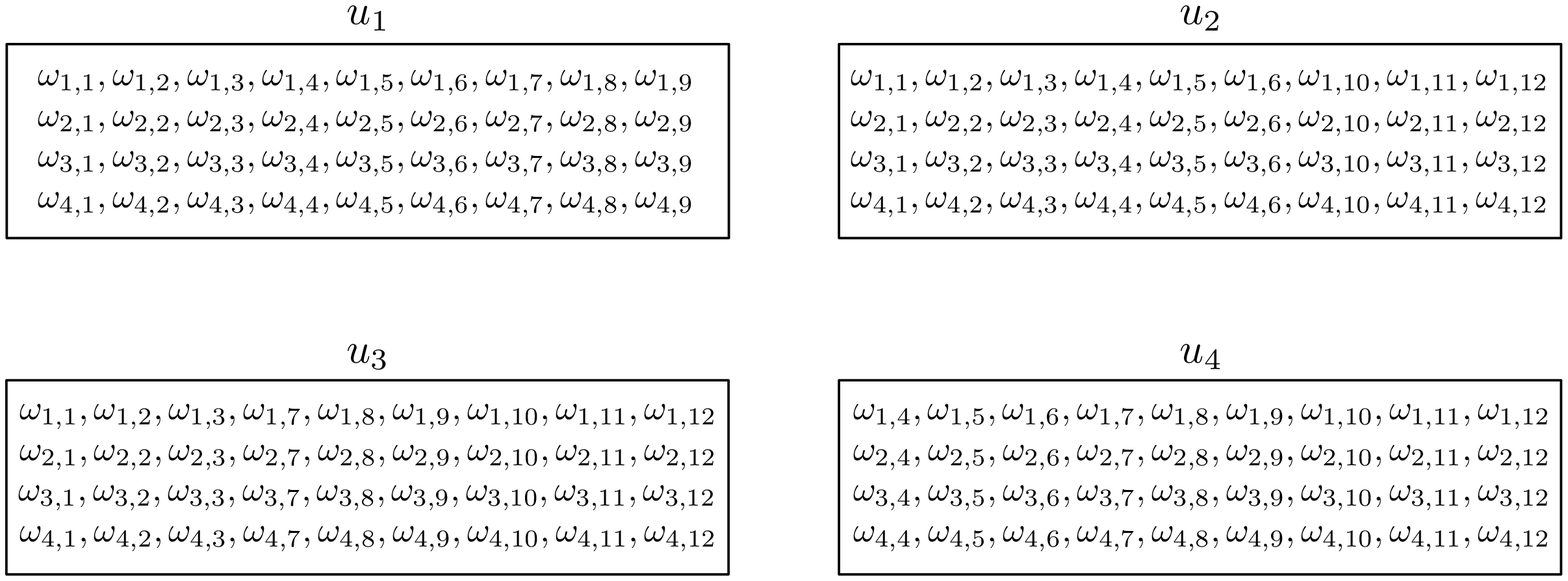}
\caption{The users cache memory content of a network with $K=4$ users of cache size $M=3$ and a file library of size $N=4$ with deterministic caching.}
\label{fig:DetCache}
\end{figure}
\end{example}

In contrast to the delivery phases proposed in \cite{DecentCodedCaching} and \cite{D2DCodedCaching} where a base station has access to the whole file library and is responsible for all the transmission or a fully cooperative users during the delivery phase, in our proposed scheme, a group of selfish users will not participate during the delivery phase in order to transmit their useful cache contents. Therefore, the non-selfish users have to compensate for the selfishness of the selfish nodes through transmitting more useful information to other users. Our scheme can tolerate at most $t-1$ selfish users in the network.

\noindent $\bullet$ {\textbf{Delivery phase:}} Let $\sU \subset \U$ be a subset of $t+1$ users. Furthermore, let $\sU= \S \cup \T$, where $\S$ and $\T$ are the sets of selfish and non-selfish users in $\sU$, and $|\S| \leq t-1$. Consider a user $u_k \in \sU$. The subset of users $\sU \setminus \{u_k\}$ shares $t$ subfiles that are useful for $u_k$ in order to retrieve its requested file. Denote the set of these $t$ subfiles by $\mathcal{P}_{\sU,k}$. If user $u_k$ obtained all subfiles in $\mathcal{P}_{\sU,k}$, for all $\sU \subset \U$ with $u_k \in \sU$ and $|\sU| = t+1$, then it would have a total of ${K-1 \choose t}t$ useful subfiles. Since there are ${K-1 \choose t-1}t$ subfiles of its requested file in its own cache memory, user $u_k$ would have obtained its whole requested file. In other words, user $u_k$ needs to obtain one distinct subfile from each user in $\sU \setminus \{u_k\}$. This implies that each user $u_i \in \sU$ can provide one useful subfile to each user in $\sU \setminus \{u_i\}$, and these $t$ useful subfiles that user $u_i$ can deliver are distinct.

We consider the following cases for all $\sU \subset \U$ of size $|\sU|=t+1$:
\begin{enumerate}
\item First, $\S = \emptyset$. Then, $\T=\sU$. In this case, to deliver the $t$ useful subfiles to each user in $\sU$, each user $u_i \in \T$ will multicast the XOR of these $t$ subfiles which are useful for the remaining $t$ users in $\sU \setminus \{u_i\}$  by transmitting
\[
\bigoplus_{u_k \in \sU \setminus \{u_i\}} \Gamma_{\sr_k,\sU \setminus \{u_k\}},
\]
where, $\Gamma_{\sr_k,\sU \setminus \{u_k\}}$ is a subfile of the file $\sr_k$ which is requested by $u_k$ and is stored in the cache memory of all users in $\sU \setminus \{u_k\}$ but is missing in $u_k$ . Each user in $\sU \setminus \{u_i\}$ can extract its required subfile using the side information in its cache.

\item Next, $\S \neq \emptyset$ and $|\S| < |\T|$. For each non-selfish user in $\T$, it just broadcasts the same packet as in Case~1. For each selfish user $u_s \in \S$, we pick an arbitrary user $u_t \in \T$ to transmit on behalf of $u_s$. According to the delivery procedure in Case~1, $u_s$ should transmit a packet in the form of
    $$
    \bigoplus_{u_k \in \sU \setminus \{u_s\}} \Gamma_{r_k,\sU \setminus \{u_k\}},
    $$
    where $\Gamma_{\sr_k,\sU \setminus \{u_k\}}$ is a subfile of file $\sr_k$ requested by user $u_k$ which is cached in all users in $\sU \setminus \{u_k\}$ but is missing in user $u_k$'s cache memory. Since $u_s$ is selfish and does not participate in the delivery process, a user $u_t$ in $\T$ is picked to transmit
    $$
    \bigoplus_{u_k \in \sU \setminus \{u_s,u_t\}} \Gamma_{r_k,\sU \setminus \{u_k\}},
    $$
    which means that user $u_t$ transmits the XOR of the same subfiles as $u_s$ except the one which is useful for user $u_t$ itself. In order to compensate for all the selfish users, a set of $|\S|$ arbitrary helpers are picked from $\T$, which is denoted by $\HH$. Each helper in $\HH$ transmits in the above way on behalf of a user in $\S$. Note that each helper misses one subfile from its requested file which is stored in the cache memory of the selfish user it transmits on behalf of. In order to compensate for these missing subfiles, a user is picked from $\T \setminus \HH$ to transmit the XOR of these $|\S|$ missing subfiles. Note that such a user can always be found since $|\S| < |\T|$. Each user in $\HH$ can then extract its missing subfile using the side information in its cache.

\item Last, $\S \neq \emptyset$ and $|\S| \geq |\T|$. For each non-selfish user in $\T$, it just broadcasts the same packet as in Case~1. By assumption, $|\sU| = t+1$ and $|\S| \leq t-1$, which implies $|\T| \geq 2$. We let all users in $\T$ be the helpers of $|\T|-1$ selfish users in $\S$. The delivery procedure is the same as in Case~2. Afterwards, we remove the $|\T|-1$ selfish users from $\S$ to obtain $\S'$. Either the procedure in Case~2 or in Case~3 is repeated depending on whether $|\S'|$ is smaller than $|\T|$ or not. The delivery procedure ends when the transmissions of all selfish users have been compensated for.
\end{enumerate}

\begin{example}
Consider the network in Example \ref{exmp:DetCachePlace} where $S=2$ of the users are selfish. Let the request vector be $\R=(\omega_1,\omega_2,\omega_3,\omega_4)$. There exists only one subset of $t+1=4$ users which is $\sU=\{u_1,u_2,u_3,u_4\}$. Without loss of generality, let $\S=\{u_1,u_2\}$ be the set of selfish users. Therefore, $|\S|=|\T|$ where the delivery procedure lays on  Case~3. Therefore, non-selfish users $\T=\{u_3,u_4\}$ multicast $\omega_{1,11} \oplus \omega_{2,8} \oplus \omega_{4,3}$ and $\omega_{1,12} \oplus \omega_{2,9} \oplus \omega_{3,6}$ which are useful for users $\{u_1,u_2,u_4\}$ and $\{u_1,u_2,u_3\}$, respectively. In order to compensate for selfish user $u_1$, user $u_3$ multicasts $\omega_{2,7} \oplus \omega_{4,1}$ which is useful for users $\{u_2,u_4\}$ and $u_4$ unicasts $\omega_{3,4}$ which is useful for user $u_3$. Moreover, in order to compensate for selfish user $u_2$, user $u_3$ multicasts $\omega_{1,10} \oplus \omega_{4,2}$ which is useful for users $\{u_1,u_4\}$ and $u_4$ unicasts $\omega_{3,5}$ which is useful for user $u_3$.
\end{example}


\begin{theorem}
Let $M$, $N$, $K$, and $S$ be the cache size of each user, number of the popular files, number of users, and number of the selfish users in the network, respectively. For $t=\frac{MK}{N} \in \ZZ^+$, the following rate is achievable:
\begin{align}
R(M)=\frac{1}{t { K \choose t}} \sum_{i=0}^{S} {S \choose i} {K-S \choose t+1-i} \Big( t+1 + \left \lceil \frac{i}{t-i} \right \rceil  \Big)
\end{align}
\end{theorem}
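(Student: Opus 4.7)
The plan is a direct counting argument. Each subfile transmitted during delivery has size $B/[t\binom{K}{t}]$ bits, so if $T$ denotes the total number of subfiles broadcast to serve an arbitrary request vector $\R$, then the achievable rate measured in units of the file size $B$ is $R(M)=T/[t\binom{K}{t}]$. Consequently it suffices to show $T=\sum_{i=0}^{S}\binom{S}{i}\binom{K-S}{t+1-i}\bigl(t+1+\lceil i/(t-i)\rceil\bigr)$. The delivery procedure processes each $(t+1)$-subset $\sU\subset\U$ exactly once, and partitioning these subsets by the number $i$ of selfish users they contain gives the combinatorial weight $\binom{S}{i}\binom{K-S}{t+1-i}$. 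Thus the theorem reduces to showing that a fixed $(t+1)$-subset with $i$ selfish users produces exactly $t+1+\lceil i/(t-i)\rceil$ subfile transmissions.

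Fix such a subset $\sU$ and split its transmissions into three types. First, the baseline Case~1 XORs: in every case each of the $t+1-i$ users in $\T\cap\sU$ broadcasts its Case~1 packet, contributing $t+1-i$ subfiles. Second, the proxy XORs: each of the $i$ users in $\S\cap\sU$ is covered by a helper in $\T\cap\sU$ that transmits on its behalf, and since each selfish user is paired with a helper exactly once, this contributes $i$. Third, the compensation packets: every round of the Case~2/Case~3 loop emits exactly one extra XOR to repair the single subfile each helper is missing from its own requested file. The three contributions sum to $(t+1-i)+i+(\text{number of loop rounds})=t+1+(\text{number of rounds})$, so only the round count remains.

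A single Case~3 round removes $|\T\cap\sU|-1=t-i$ selfish users from the uncompensated pool, while the terminating Case~2 round (when triggered) removes the remaining $|\S'|\in\{1,\ldots,t-i\}$. Since every round removes at most $t-i$ selfish users, at least $\lceil i/(t-i)\rceil$ rounds are needed; and since Case~3 always extracts the maximum $t-i$ while the trailing Case~2 cleans up in a single round, exactly $\lceil i/(t-i)\rceil$ rounds occur (a two-case check separates $(t-i)\mid i$ from $(t-i)\nmid i$). The hypothesis $S\le t-1$, which is the scheme's tolerance bound, forces $i\le t-1$, so the denominator $t-i$ is positive whenever $i\ge 1$; the edge case $i=0$ uses $\lceil 0/t\rceil=0$ to recover the pure Case~1 count of $t+1$.

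The main obstacle is the round count, since the alternation of Case~3 with the terminating Case~2 is easy to miscount; once the bounds ``at most $t-i$ per round'' and ``exactly $t-i$ per Case~3 round'' are nailed down, the arithmetic is immediate. Correctness of decoding at each recipient follows from the same XOR-cancellation argument used in Case~1 of the delivery phase -- each user zeros out exactly those terms whose subfiles already lie in its own cache -- and the scheme is exact for any $B$ divisible by $t\binom{K}{t}$, so $\mathrm{Pr}_{\mathrm{e}}=0$ for all sufficiently large $B$ and no probabilistic argument is needed.
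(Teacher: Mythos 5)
Your proposal is correct and follows essentially the same route as the paper's proof: classify the $(t+1)$-subsets by their number $i$ of selfish members, giving the weight $\binom{S}{i}\binom{K-S}{t+1-i}$, and count $t+1+\lceil i/(t-i)\rceil$ packets of size $B/[t\binom{K}{t}]$ per subset. The only difference is bookkeeping --- you split the count into baseline, proxy, and per-round fixup packets, whereas the paper groups it as $q$ full Case~3 rounds of $t+1-i$ transmissions plus a possible terminal Case~2 round of $r+1$; both reduce to the same identity $q(t-i)+r=i$.
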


\begin{proof}
The number of $(t+1)$-subsets of $\sU$ which contain $i$ selfish users is given by ${S \choose i} {K-S \choose t+1-i}$. Consider the number of transmissions in such a subset. Since each non-selfish user transmits according to Case~1, the $t+1-i$ non-selfish users will first transmit a total of $t+1-i$ packets. Next, they need to compensate for the non-participation of the selfish users. Let $q$ and $r$ be the quotient and remainder when dividing $i$ by $t-i$. Then the rule in Case~3 will be repeatedly applied for $q$ times, which require a total of $q(t+1-i)$ transmissions. If $r = 0$, the delivery phase ends; otherwise, the rule in Case~2 will be applied, which requires $r+1$ transmissions. Let $I(r>0)$ be the indicator function which equals 1 if $r>0$ and equals 0 otherwise. Consequently, the total number of transmissions by this subset of users is
\begin{align*}
& \big[ t+1-i \big] + \big[ q(t+1-i) + r + I(r>0) \big] \\
= & \; t + 1 + q + I(r>0) \\
= & \; t+1 + \left \lceil \frac{i}{t-i} \right \rceil 
\end{align*} 
Since the length of each transmission is $\frac{1}{t {K \choose t}}$ of the file size, the statement then follows.
\end{proof}


\section{Random Caching} \label{sec:RandCache}
Due to the dynamic nature of the wireless networks, the deterministic caching policy may not be able to fully advantage D2D networks proposed in this paper. The deterministic caching approach, as explained in Section \ref{sec:DetCache}, requires full control on the caching phase since some predetermined subfiles of each file must be placed in the cache memory of each user. In a wireless D2D network, however, devices may leave or join the network at any time. Therefore, a decentralized or random caching policy is more realistic and practical as it fits better these networks characteristics.

In contrast to the decentralized caching algorithm proposed in \cite{DecentCodedCaching}, randomly caching the bits of each file in D2D networks does not guarantee a successful recovery of the requested files due to the lack of a server (or base station) during the delivery phase to provide the subfiles that are not cached in any user. In order to overcome this issue, we follow the caching procedure in \cite{D2DCodedCaching}, where an MDS-code is employed to encode the library files before caching by the users. Consequently, any user can retrieve its requested file by having access to a subset of sufficient encoded symbols.

\noindent $\bullet$  ${\textbf{Caching phase:}}$ Every file $\omega_1,\ldots,\omega_N$ of length $B$ bits is divided into $I$ subfiles each of size $\frac{B}{I}$ bits. Each subfile is considered as a symbol in $\FF_{2^{B/I}}$. Then, each file is encoded using an $(I,\frac{I}{r})$ MDS-code over $\FF_{2^{B/I}}$, where $r$ is the appropriate code rate which we will establish its necessary condition. Each user $u \in \U$ then randomly selects a set of $\frac{MI}{N}$ indices which indicate the corresponding encoded symbols of each file to be cached in the memory of user $u$.

By the end of the caching phase, the encoded symbols of any file $\omega_i$ for $i=1,\ldots,N$ can be partitioned as $\omega_{i,\P}$ where $\P \subset \{1,\ldots,K\}$ such that $\omega_{i,\P}$ identifies the encoded symbols of file $\omega_i$ cached {\em exclusively} in the users of set $\P$. For instance, for a network with $N=4$ files, $\omega_{1,\{2,4\}}$ specifies the encoded symbols of file $\omega_1$ cached in users $u_2$ and $u_4$, but {\em not} cached in any other user. Similarly, $\omega_{1,\emptyset}$ specifies the encoded symbols of file $\omega_1$ which are cached in no user. Let $\Gamma_{\omega_{i,\P}}$ be the block (i.e., concatenation) of encoded symbols of file $\omega_i$ specified by $\omega_{i,\P}$. As a consequence, the cache memory content of any user $u \in \U$ can be presented as $\Gamma_{\omega_{1,\P}},\Gamma_{\omega_{2,\P}},\ldots,\Gamma_{\omega_{N,\P}}$ for all $\P \subset \U$ and $u \in \P$.

During the caching phase, each user caches $\frac{MI}{N}$ out of $\frac{I}{r}$ encoded symbols of each file. Therefore, the probability that a given encoded symbol of any file is cached in an arbitrary user is given by $Mr/N$.
According to the Law of Large Numbers, the number of encoded symbols of any file $\omega_i$ exclusively cached by a set of users $\P$ is given by
\begin{align} \label{eq:LLN}
|\omega_{i,\P}| \approx (\frac{Mr}{N})^{|\P|}(1-\frac{Mr}{N})^{K-|\P|}\frac{I}{r},
\end{align}
with a high probability for large number of symbols $I$.

The delivery phase we are going to propose for D2D networks with selfish users guarantees that each user $u$ at the end of the procedure has access to the encoded symbols of its requested file cached in all non-selfish users. Obviously, if there exist no selfish user in the network, each user can have access to all encoded symbols of its requested file which are cached in all other users \cite{D2DCodedCaching}.

\begin{example} \label{exmp:1}
Consider a network with a file library of $N=4$ files and $K=4$ users each of which has a cache size $M=2$. Assume user $u_1$ requests file $\omega_1$. As mentioned earlier, the cached symbols of encoded file $\omega_1$ in user $u_1$ can be presented as $\Gamma_{\omega_{1,\{1\}}}$, $\Gamma_{\omega_{1,\{1,2\}}}$, $\Gamma_{\omega_{1,\{1,3\}}}$, $\Gamma_{\omega_{1,\{1,4\}}}$, $\Gamma_{\omega_{1,\{1,2,3\}}}$, $\Gamma_{\omega_{1,\{1,2,4\}}}$, $\Gamma_{\omega_{1,\{1,3,4\}}}$, and $\Gamma_{\omega_{1,\{1,2,3,4\}}}$. At the end of the delivery phase, user $u_1$ will also have access to the encoded symbols of file $\omega_1$ cached in all of the other users (due to the fully cooperative network), namely $\Gamma_{\omega_{1,\{2\}}}$, $\Gamma_{\omega_{1,\{3\}}}$, $\Gamma_{\omega_{1,\{4\}}}$, $\Gamma_{\omega_{1,\{2,3\}}}$, $\Gamma_{\omega_{1,\{2,4\}}}$, $\Gamma_{\omega_{1,\{3,4\}}}$, and $\Gamma_{\omega_{1,\{2,3,4\}}}$. In other words, all of the encoded symbols of file $\omega_1$ cached in all subsets $\P \subset \{1,2,3,4\}$ of the users are known by user $u_1$ at the end of the delivery phase.
\end{example}


\subsection{Rate of the MDS Code}

Now we initiate a discussion on the rate of the MDS-code which is used in the caching phase.

\begin{proposition} \label{prop:TotalSymb}
For a network with $K$ users each of cache size $M$ and a file library of size $N$, the number of the known encoded symbols of any requested file at each user by the end of the delivery phase is
\begin{align} \label{eq:KnownSymb}
\frac{I}{r}\sum_{i=1}^{K}{K \choose i}\left(\frac{Mr}{N}\right)^i \left(1-\frac{Mr}{N}\right)^{K-i}.
\end{align}
\end{proposition}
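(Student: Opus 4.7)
The plan is to derive the count directly from the random caching model and the guarantee provided by the delivery phase, treating the proposition as essentially a bookkeeping exercise over subsets of users. The two ingredients to combine are (i) the Law of Large Numbers estimate in (\ref{eq:LLN}) giving the size of $\omega_{i,\P}$ for a fixed subset $\P$, and (ii) the delivery-phase guarantee that a requesting user ends up knowing every encoded symbol of its requested file that is cached in at least one user.

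First, I would fix an arbitrary user $u$ and an arbitrary requested file $\omega_i$. Partition the encoded symbols of $\omega_i$ according to the exclusive caching set $\P \subset \U$: symbols with $\P = \emptyset$ are cached nowhere and cannot be recovered by anyone (there is no base station), while symbols with $\P \neq \emptyset$ are, by construction of the random caching and by the delivery procedure, available to $u$ by the end of the delivery phase (either directly from its own cache if $u \in \P$, or through transmissions from users in $\P$ otherwise). Thus the known count at $u$ equals $\sum_{\P \subset \U,\, \P \neq \emptyset} |\omega_{i,\P}|$.

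Next, I would apply (\ref{eq:LLN}): for any $\P$ with $|\P|=i$, the quantity $|\omega_{i,\P}|$ is approximately $(Mr/N)^i(1-Mr/N)^{K-i} I/r$, which depends only on the cardinality of $\P$. Grouping the non-empty subsets by size yields
\begin{align*}
\sum_{\P \neq \emptyset} |\omega_{i,\P}| \approx \frac{I}{r} \sum_{i=1}^{K} \binom{K}{i} \left(\frac{Mr}{N}\right)^{i}\left(1-\frac{Mr}{N}\right)^{K-i},
\end{align*}
which is exactly (\ref{eq:KnownSymb}). The estimate holds with high probability for large $I$ by the same LLN argument used to justify (\ref{eq:LLN}).

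The only delicate point, and the one I would take most care to justify, is the claim that the delivery phase truly covers every non-empty $\P$ for user $u$'s requested file. When $u \in \P$ this is immediate from the caching rule, but when $u \notin \P$ it relies on the delivery procedure being able to route those symbols to $u$ using only the transmissions specified for the network (in the fully cooperative reference case, or under the forthcoming scheme for selfish users as described just before the example). I would therefore state this coverage property as a standing assumption inherited from the delivery phase description, and defer the detailed verification to the delivery-scheme analysis; given this, the proposition reduces to the combinatorial summation above.
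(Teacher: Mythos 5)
Your proof is correct and follows essentially the same route as the paper's: invoke the Law of Large Numbers estimate for $|\omega_{i,\P}|$, note that all symbols cached in some non-empty subset $\P$ are available to the requesting user by the end of the delivery phase, and sum over the ${K \choose i}$ subsets of each size $i \geq 1$. You are in fact more explicit than the paper about why the $\P = \emptyset$ term is excluded and about the delivery-phase coverage assumption, but the underlying argument is the same.
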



\begin{proof}
Each user has access to all encoded symbols of its requested file cached in all subset of users $\P \subset \U$. The statement follows from~\eqref{eq:LLN} and the fact that there are ${K \choose i}$ subsets of size $i$.
\end{proof}

\begin{theorem} \label{thm:CodeRate}
For a network with $K$ users each of cache size $M$ and a file library of size $N$ where $\frac{MK}{N}>1$, any arbitrary request vector $\R=(\sr_1,\sr_2,\dots,\sr_K)$ is recoverable by the users if all the files of the library are encoded using an MDS-code with rate $r$ such that $r<1$ is the real positive root of the polynomial
\begin{align} \label{eq:CodeRate}
f(r) \triangleq \sum_{i=0}^{K-1}{K \choose i} \left(-\frac{M}{N}\right)^{K-i}r^{K-i-1}+1=0.
\end{align}
\end{theorem}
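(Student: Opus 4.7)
The plan is to reduce the recoverability condition to the requirement that each user end up with at least $I$ encoded symbols of its requested file, so that the MDS property guarantees decodability. From there, Proposition \ref{prop:TotalSymb} gives the exact count, and a binomial-theorem manipulation converts the inequality into the polynomial equation stated in the theorem.

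First, I would recall that the file has been encoded by an $(I, I/r)$ MDS-code, so any $I$ distinct encoded symbols of a file suffice to reconstruct it. Combining this with the paragraph preceding Example \ref{exmp:1}, which asserts that at the end of the delivery phase each user has access to every encoded symbol of its requested file cached in any subset of (non-selfish) users, the recoverability condition becomes
\begin{equation*}
\frac{I}{r}\sum_{i=1}^{K}{K \choose i}\!\left(\frac{Mr}{N}\right)^{\!i}\!\left(1-\frac{Mr}{N}\right)^{\!K-i} \;\geq\; I,
\end{equation*}
directly from Proposition \ref{prop:TotalSymb}.

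Next, I would simplify the left-hand side using the binomial identity
\begin{equation*}
\sum_{i=0}^{K}{K \choose i}\!\left(\frac{Mr}{N}\right)^{\!i}\!\left(1-\frac{Mr}{N}\right)^{\!K-i} = 1,
\end{equation*}
which lets me rewrite the sum (missing only the $i=0$ term) as $1-(1-Mr/N)^K$. The recoverability condition thus collapses to
\begin{equation*}
1-\l(1-\tfrac{Mr}{N}\r)^{K} \;\geq\; r, \quad \text{i.e.,} \quad \l(1-\tfrac{Mr}{N}\r)^{K} + r - 1 \;\leq\; 0.
\end{equation*}
Expanding $(1-Mr/N)^K$ via the binomial theorem, regrouping, and dividing by $r$ (valid since any usable rate has $r>0$) yields exactly $f(r)\leq 0$, with equality at the threshold rate defined by $f(r)=0$.

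The remaining step is to show that the polynomial $f(r)$ indeed has a real root in $(0,1)$ under the hypothesis $MK/N>1$. The key observation is the identity $r\cdot f(r) = (1-Mr/N)^K - (1-r)$; letting $g(r)$ denote the right-hand side, one checks $g(0)=0$, $g'(0)=1-MK/N<0$, and $g(1)=(1-M/N)^K\geq 0$. By continuity, $g$ (and hence $f$) must vanish at some $r\in(0,1)$, establishing the existence of the claimed root. I expect the manipulation itself to be routine; the only delicate point is confirming that decodability really does require only the symbol count and not any structural condition on which symbols are gathered, which is precisely what the MDS property buys us.
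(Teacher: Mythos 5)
Your proposal is correct and follows essentially the same route as the paper's proof: reduce recoverability to the symbol-count inequality from Proposition \ref{prop:TotalSymb}, collapse the sum via the binomial identity to obtain $f(r)\leq 0$, and invoke the intermediate value theorem (your check of $g(r)=rf(r)$ at $0$ and $1$ is equivalent to the paper's direct evaluation $f(0)=1-MK/N<0$ and $f(1)>0$). No substantive differences to report.
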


\begin{proof}
Since the $I$ symbols (i.e., subfiles) of each file are encoded by an $(I,\frac{I}{r})$ MDS-code, any $I$ symbols of the encoded file is sufficient to recover the requested file. Therefore, the the number of symbols of the requested file known by the requesting user in \eqref{eq:KnownSymb} must satisfy
\begin{align} \label{eq:KnownSymbI}
\frac{I}{r}\sum_{i=1}^{K}{K \choose i}\left(\frac{Mr}{N}\right)^i \left(1-\frac{Mr}{N}\right)^{K-i} \geq I,
\end{align}
which can be re-written as
\begin{equation}
\frac{1}{r} \left[(1-\frac{Mr}{N})^K - 1 \right] + 1 = f(r) \leq 0.
\end{equation}
Since $f(1) > 0$ and $f(0) = 1 - \frac{MK}{N} < 0$, the polynomial equation $f(r)=0$ must have a real root $r \in (0,1)$.
\end{proof}

Note that the condition $\frac{MK}{N}>1$ is required so that the total memory size of the $K$ users is larger than the size of the whole library of files.

\begin{example}
Consider the same network as in Example \ref{exmp:1}. Then, the polynomial in \eqref{eq:CodeRate} is given by
\[
(1/2)^4 r^3-4(1/2)^3r^2+6(1/2)^2r-4(1/2)+1=0.
\]
The real positive root of the polynomial is $r=0.91$ which specifies the rate of the MDS-code to encode the library files. Encoding the library files with this rate guarantees that any arbitrary request vector is recoverable by the users.
\end{example}

\begin{theorem} \label{thm:SelfishCodeRate}
For a network with $K$ users each with a cache of size $M$ such that $S$ out of $K$ users are selfish, and a file library of size $N$ where $\frac{M(K-S)}{N}>1$, any arbitrary request vector $\R=(\sr_1,\sr_2,\dots,\sr_K)$ is recoverable by the users if all the files of the library are encoded using an MDS-code with rate $r$ such that $r<1$ is the real positive root of the polynomial
\begin{align} \label{eq:SelfishCodeRate}
\sum_{i=0}^{K-S-1}{K-S \choose i} \left(-\frac{M}{N}\right)^{K-S-i}r^{K-S-1-i}+1=0.
\end{align}
\end{theorem}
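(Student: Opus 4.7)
The plan is to mirror the proof of Theorem~\ref{thm:CodeRate}, replacing the full set of $K$ users by the set of $K-S$ non-selfish users, because the delivery phase (by construction) gives every requesting user access only to encoded symbols that are cached in at least one non-selfish user. The first step is therefore to count, via the Law of Large Numbers as in \eqref{eq:LLN}, the number of encoded symbols of the requested file that a given user knows after delivery. A specific encoded symbol is inaccessible to the requesting user precisely when it is cached in no non-selfish user, which happens with probability $(1 - Mr/N)^{K-S}$; hence the expected number of accessible symbols is approximately
\begin{equation}
\frac{I}{r}\Bigl[1 - \bigl(1 - \tfrac{Mr}{N}\bigr)^{K-S}\Bigr].
\end{equation}
Note that a selfish requester additionally knows what lies in its own cache, so the worst case is a non-selfish requester; it is this worst case that drives the bound.

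The second step is to invoke the MDS property: any $I$ coded symbols suffice to recover the file, so I would impose that the above quantity be at least $I$. Rearranging this inequality yields the condition
\begin{equation}
\frac{1}{r}\Bigl[\bigl(1 - \tfrac{Mr}{N}\bigr)^{K-S} - 1\Bigr] + 1 \;\leq\; 0.
\end{equation}
Expanding $(1 - Mr/N)^{K-S}$ by the binomial theorem, cancelling the constant term against the $-1$, and dividing by $r$ produces exactly the polynomial $f(r)$ displayed in \eqref{eq:SelfishCodeRate}. So the sufficient condition is $f(r) \leq 0$, and the rate threshold is the smallest positive root.

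The third step is an existence argument by the Intermediate Value Theorem. Evaluating at the endpoints, only the term with $i = K-S-1$ survives at $r=0$, giving $f(0) = 1 - M(K-S)/N$, which is strictly negative by the hypothesis $M(K-S)/N > 1$. At $r = 1$, the closed form $f(1) = (1 - M/N)^{K-S}$ is strictly positive whenever $M < N$ (the non-degenerate caching regime). Continuity of $f$ then yields a real root $r \in (0,1)$, and choosing this root as the MDS code rate guarantees recoverability of any request vector.

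I do not anticipate serious technical obstacles: the delivery-phase guarantee is taken as given from the earlier discussion, and the probabilistic count is justified by the same large-$I$ concentration argument used in Proposition~\ref{prop:TotalSymb}. The one place that deserves care is making sure the worst-case user is correctly identified (so that one really uses $K-S$ rather than $K-S+1$ in the exponent), and checking that the algebraic rearrangement from the coverage inequality to the stated polynomial is index-for-index consistent with \eqref{eq:SelfishCodeRate}; both are routine.
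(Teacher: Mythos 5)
Your proposal is correct and follows essentially the same route as the paper: the paper's proof simply observes that the worst case is a non-selfish requester, which can access the $K-S$ caches of the non-selfish users (its own plus $K-S-1$ others), and then invokes Theorem~\ref{thm:CodeRate} with $K$ replaced by $K-S$ --- exactly the substitution you carry out explicitly, including the coverage count $\frac{I}{r}[1-(1-\frac{Mr}{N})^{K-S}]$, the MDS threshold, and the sign check $f(0)=1-\frac{M(K-S)}{N}<0$, $f(1)>0$. Your version is merely a fully expanded form of the paper's one-line reduction, with the worst-case user correctly identified.
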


\begin{proof}
Note that Theorem~\ref{thm:CodeRate} is true because each user is able to access the cache of $K-1$ other users at the end of the delivery phase. When $S$ out of the $K$ users are selfish, each selfish user can access the cache of $K-S$ non-selfish users while each
non-selfish user can access the cache of $K-S-1$ other non-selfish users, so the request of a non-selfish user is harder to satisfy. The statement then follows directly from Theorem~\ref{thm:CodeRate} by replacing $K$ by $K-S$.
\end{proof}

\subsection{Delivery phase and the achievable rate}
Now, we propose the delivery algorithm of our D2D random caching scheme where a subset of users in the network are selfish. Before describing the procedure, we need to define some notations first. Let $\Gamma_{\omega_i,\P} \oplus \Gamma_{\omega_i,\P'}$ be the XOR of these two symbol blocks. In general, these two symbol blocks may not be of equal length. In that case, we zero-pad the shorter one so that the two blocks are of the same length. The notation can be generalized to more than two symbol blocks. Besides, in our delivery procedure, a symbol block is usually divided into a certain number, say $f$, of disjoint, equal-length segments. We call such a segment an $\frac{1}{f}$-segment of a symbol block. Consider the case where there are several users, each of which has a symbol block. These symbol blocks are of the same length but their contents can be different. Each user transmits an $\frac{1}{f}$-segment of its own symbol block. If the offsets of these $\frac{1}{f}$-segments are all different, then we say that those segments are {\em disjoint}.

\noindent $\bullet$ {\textbf{Delivery phase:}} In each iteration of the delivery procedure, a subset $\sU \subset \U$ of size $|\sU|=\su$ for all $\su=K,K-1,\ldots,2$ is selected. Let $\S \subset \sU$ and $\T \subset \sU$  be the subsets of selfish and non-selfish users in $\sU$, and let $|\T| = \st$. Do the following for each $\sU$ that contains one or more non-selfish users:
\begin{enumerate}
\item If $|\T| = 1$, then let the only user $u\in \T$ transmits $\bigoplus_{v \in \sU, v \neq u} \Gamma_{\omega_v,\sU \setminus \{v\}}$.
\item If $|\T|\geq 2$, then
    \begin{itemize}
     \item Pick an arbitrary user $u^*$ from $\T$. User $u^*$ transmits a disjoint $\frac{1}{\st-1}$-segment of $\bigoplus_{u \in \T, u \neq u^*} \Gamma_{\omega_u,\sU \setminus \{u\}}$. \\
     \item Each user $u \in \T \setminus\{u^*\}$ transmits a disjoint $\frac{1}{\st-1}$-segment of $\bigoplus_{v \in \sU, v \neq u} \Gamma_{\omega_v,\sU \setminus \{v\}}$.
     \end{itemize}
\end{enumerate}

Now, we establish the achievable rate based on the delivery algorithm proposed earlier.

\begin{theorem} \label{thm:SelfishTransRate}
For a network with $K$ users each of cache size $M$ such that $S$ out of $K$ users are selfish, and a file library of size $N$ which are encoded by a MDS-code of rate $r$ satisfying the condition in Theorem \ref{thm:SelfishCodeRate}, the following rate is achievable:
\begin{align} \nonumber
R(M)=\frac{1}{r}\sum_{i=2}^{K} \sR(i) \left(\frac{Mr}{N}\right)^{i-1} \left(1-\frac{Mr}{N}\right)^{K-i+1}
\end{align}
where $\sR(i)$ is given by
\begin{align} \label{eq:RandTrans}
\sR(i) = \sum_{j=0}^{i-2} {S \choose j} {K-S \choose i-j} \frac{i-j}{i-j-1}+(K-S) {S \choose i-1}.
\end{align}
\end{theorem}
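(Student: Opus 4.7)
The plan is to account for the number of bits transmitted in each iteration of the delivery procedure of Section~\ref{sec:RandCache} and divide the total by $B$. The only analytic input is the Law of Large Numbers estimate~\eqref{eq:LLN}, which controls the size of each cached block $\Gamma_{\omega_v,\P}$; everything else reduces to combinatorial book-keeping once the delivery-phase cases are separated.

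First I would fix one iteration of the delivery phase, corresponding to a subset $\sU\subset\U$ with $|\sU|=i$, and split its users into $j$ selfish and $i-j$ non-selfish ones. Every block $\Gamma_{\omega_v,\sU\setminus\{v\}}$ that appears in this iteration is exclusively cached by the set $\sU\setminus\{v\}$ of size $i-1$, so by~\eqref{eq:LLN} each such block contains approximately $\left(\frac{Mr}{N}\right)^{i-1}\left(1-\frac{Mr}{N}\right)^{K-i+1}\frac{I}{r}$ encoded symbols with high probability for large $I$, which amounts to $\frac{1}{r}\left(\frac{Mr}{N}\right)^{i-1}\left(1-\frac{Mr}{N}\right)^{K-i+1}B$ bits per block.

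Next I would count, per subset $\sU$, how many full-block-equivalents are actually transmitted. The delivery rules give three cases: if $i-j=0$ the iteration is skipped; if $i-j=1$, Case~1 transmits exactly one block; and if $i-j\geq 2$, Case~2 has each of the $i-j$ non-selfish users transmit a disjoint $\frac{1}{i-j-1}$-segment, for a combined $\frac{i-j}{i-j-1}$ blocks. Multiplying by the ${S\choose j}{K-S\choose i-j}$ size-$i$ subsets with this split and summing over $j$, the $i-j=1$ contribution collects into $(K-S){S\choose i-1}$, while the $i-j\geq 2$ contributions collect into $\sum_{j=0}^{i-2}{S\choose j}{K-S\choose i-j}\frac{i-j}{i-j-1}$; together these reproduce $\sR(i)$ as defined in~\eqref{eq:RandTrans}.

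Finally I would combine these pieces: the total number of transmitted bits on size-$i$ subsets equals $\sR(i)\cdot\frac{1}{r}\left(\frac{Mr}{N}\right)^{i-1}\left(1-\frac{Mr}{N}\right)^{K-i+1}B$, and summing over $i=2,3,\dots,K$ and dividing by $B$ yields the claimed $R(M)$; the $i=1$ case contributes nothing since a singleton subset triggers no transmission. I do not expect a serious obstacle, since correctness of recovery is already underwritten by Theorem~\ref{thm:SelfishCodeRate} and the remainder is a clean double-count. The one point to handle carefully is that~\eqref{eq:LLN} is a high-probability approximation valid for large $I$; invoking the definition of achievability, for any $\epsilon>0$ one takes $I$ (and hence $B$) large enough that random fluctuations in the block sizes inflate the rate by at most $\epsilon$, preserving the stated value.
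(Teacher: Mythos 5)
Your proposal is correct and follows essentially the same route as the paper: count the full-block-equivalents transmitted per size-$i$ subset according to the two delivery cases to obtain $\sR(i)$, use the Law of Large Numbers estimate~\eqref{eq:LLN} for the block length of symbols exclusively cached by $i-1$ users, and sum over $i$. Your treatment is in fact slightly more careful than the paper's in making explicit that each Case~2 user sends only a $\frac{1}{i-j-1}$-segment of a block and in addressing the high-probability nature of the approximation via the achievability definition, but these are refinements of the same argument rather than a different approach.
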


\begin{proof}
The function $\sR(i)$ in \eqref{eq:RandTrans} counts the number of transmissions for each subset of $i$ users. In the first term of \eqref{eq:RandTrans}, ${S \choose j} {K-S \choose i-j}$ is the number of the subsets of $i$ users which contain $i-j \geq 1$ non-selfish users. Each of such subsets transmits $\frac{1}{i-j-1}$-segment of $i-j$ packets according to the Case~2 of the delivery phase. In the second term of \eqref{eq:RandTrans}, $(K-S) {S \choose i-1}$ is the number of the subsets of $i$ users which contain only one non-selfish user and each of which transmits one packet according to the Case~1 of the delivery phase. Moreover, the number of the encoded symbols which are transmitted by each non-selfish user in a subset of $i$ users is $\frac{I}{r} \left(\frac{Mr}{N}\right)^{i-1} \left(1-\frac{Mr}{N}\right)^{K-i+1}$. Since the size of each symbol is $\frac{B}{I}$ bits, then the theorem follows.
\end{proof}

\begin{example}
Consider a network with $K=5$ users each of which has a cache memory of size $M=2$ while $S=2$ users are selfish, and a library of size $N=5$ files. In order to satisfy any arbitrary request, all files must be encoded by a MDS-code of rate $r=0.44$ follows from \eqref{eq:SelfishCodeRate} in Theorem \ref{thm:SelfishCodeRate}. Assume, without loss of generality that users $u_1$ and $u_2$ are selfish and the request vector is $\R=(\omega_1,\omega_2,\omega_3,\omega_4,\omega_5)$. Then, the delivery phase will be carried out for all subsets $\sU \subset \U$ of size $|\sU|=5,4,3,2$. Suppose, $\sU=\{1,2,3,4,5\}$ which contains three non-selfish users $\T=\{3,4,5\}$. Following the Case~2 of the delivery phase, we choose $u^*=u_3$. Then, $u_3$ will transmit $\frac{1}{2} \Gamma_{\omega_4,\{1,2,3,5\}} \oplus \Gamma_{\omega_5,\{1,2,3,4\}}$. Moreover, users $u_4$ and $u_5$ will respectively transmit $\frac{1}{2} \Gamma_{\omega_1,\{2,3,4,5\}} \oplus \Gamma_{\omega_2,\{1,3,4,5\}} \oplus \Gamma_{\omega_3,\{1,2,4,5\}} \oplus\Gamma_{\omega_5,\{1,2,3,4\}}$ and $\frac{1}{2} \Gamma_{\omega_1,\{2,3,4,5\}} \oplus \Gamma_{\omega_2,\{1,3,4,5\}} \oplus \Gamma_{\omega_3,\{1,2,4,5\}} \oplus\Gamma_{\omega_4,\{1,2,3,5\}}$. Now, let $\sU=\{u_1,u_2,u_3\}$ which contains only one non-selfish user $\T=\{u_3\}$. Following the Case~1 of the delivery phase, user $u_3$ will transmit $\Gamma_{\omega_1,\{2,3\}} \oplus \Gamma_{\omega_2,\{1,3\}}$. The same procedure will be carried out for all other $\sU$. At the end of the delivery phase, all users will have access to all encoded symbols cached in the network except the ones cached in selfish users. The transmission rate for this network setting is $R(2) = 3.05$ following from Theorem \ref{thm:SelfishTransRate}.
\end{example}

The performance of the ``deterministic`` and ``random`` caching schemes in a partially cooperative D2D network is depicted in Figure \ref{fig:Sim} for two different network settings. Figure \ref{fig:Sima} and \ref{fig:Simb} compare the achievable rates of deterministic and random caching schemes for a network with parameters $K=100$ users, $S=10$ selfish users, $N=50$ files and $K=50$ users, $S=10$ selfish users, $N=100$ files, respectively. The simulation results show that although the deterministic scheme has a slightly better performance (i.e., smaller transmission rate), the gap vanishes as the cache memory size $M$ increases. The lower bounds depicted in Figure \ref{fig:Sim} is based on the cut-set argument \cite{D2DCodedCaching} and is given by
\[
R^*(M) \geq \max_{l=\{1,2,\ldots,\min\{K,N\}\}} \left(l - \frac{l}{\lfloor{\frac{m}{l}}\rfloor}M\right),
\] 
where, $R^*(M)$ is the optimal achievable rate defined as 
\[
R^*(M) \triangleq \inf{\{R(M) : \: R(M) \text{~is achievable}\}}.
\]

\begin{figure}[ht] 
  \centering
  \begin{subfigure}{\linewidth}
    \centering
    \includegraphics[width=1.02\linewidth]{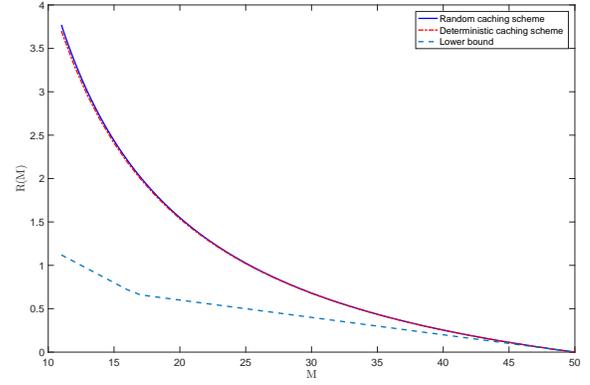}
    \caption{$K=100$, $N=50$, $S=20$.}
    \label{fig:Sima}
  \end{subfigure}

  \begin{subfigure}{\linewidth}
    \centering
    \includegraphics[width=1.02\linewidth]{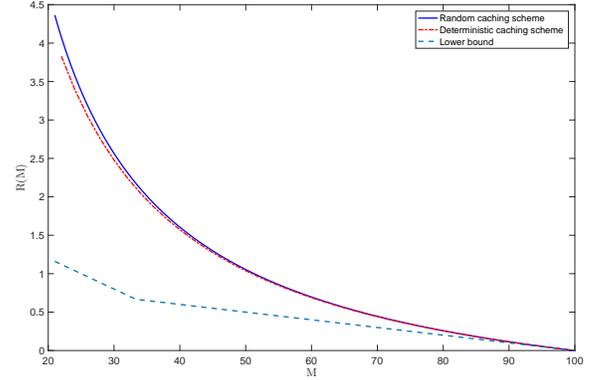}
    \caption{$K=50$, $N=100$, $S=10$.}
    \label{fig:Simb}
  \end{subfigure}  
  \caption{Performance of the deterministic and random caching schemes in a partially cooperative D2D networks with $S$ selfish users out of $K$ users and a file library of size $N$ files.}  
  \label{fig:Sim}
\end{figure}  

\section{Conclusions} \label{sec:Concl}
In this paper, we consider partially cooperative D2D caching networks where a group of users are selfish. All users cache the file library during the caching phase while only non-selfish users transmit their useful cache content to other users. We propose a deterministic caching scheme where the users' cache content is a deterministic function of the file library. We also propose a more realistic decentralized caching scheme where users randomly cache some parts of the files. We show that any arbitrary user request is recoverable under our proposed caching schemes at the presence of selfish users in the D2D networks.




\bibliographystyle{IEEEtran}
\bibliography{Ali_bib-1}

\end{document}